\tikzset{
arc/.style={->,>=stealth',shorten >=1pt,thick}
}
\newtheorem{theorem}{Theorem}
\newtheorem{lemma}[theorem]{Lemma}
\newtheorem{corollary}[theorem]{Corollary}
\newtheorem{observation}[theorem]{Observation}
\newtheorem*{theorem*}{Theorem}
\newcommand{\abs}[1]{\left| #1 \right|}
\begin{document}

\title{Makespan Minimization with OR-Precedence Constraints}

\author{Felix Happach
    \thanks{
    Department of Mathematics and School of Management,
    Technische Universit\"{a}t M\"{u}nchen, Germany.
    This work has been supported by the Alexander von Humboldt Foundation with funds from the German Federal Ministry of Education and Research (BMBF).  Email address: {\tt felix.happach@tum.de} 
    }
}

\date{}

\maketitle

\begin{abstract}
We consider a variant of the NP-hard problem of assigning jobs to machines to minimize the completion time of the last job.
Usually, precedence constraints are given by a partial order on the set of jobs, and each job requires all its predecessors to be completed before it can start.
In his seminal paper, Graham (1966) presented a simple 2-approximation algorithm, and, more than 40 years later, Svensson (2010) proved that 2 is essentially the best approximation ratio one can hope for in general.

In this paper, we consider a different type of precedence relation that has not been discussed as extensively and is called OR-precedence.
In order for a job to start, we require that \emph{at least one} of its predecessors is completed -- in contrast to \emph{all} its predecessors.
Additionally, we assume that each job has a release date before which it must not start.

We prove that Graham's algorithm has an approximation guarantee of 2 also in this setting, and present a polynomial-time algorithm that solves the problem to optimality, if preemptions are allowed.
The latter result is in contrast to classical precedence constraints, for which Ullman (1975) showed that the preemptive variant is already NP-hard.
Our algorithm generalizes a result of Johannes (2005) who gave a polynomial-time algorithm for unit processing time jobs subject to OR-precedence constraints, but without release dates.
The performance guarantees presented here match the best-known ones for special cases where classical precedence constraints and OR-precedence constraints coincide.
%Moreover we introduce the concept of \emph{minimal chains}, which is crucial for the analysis of Graham's algorithm.
%If preemptions are allowed, we show that there is an optimal solution that is in line with these minimal chains.
%By exploiting the structure of this solution, we can use polynomial-time algorithms for special cases of classical precedence constraints to solve the preemptive version of the problem.
\end{abstract}

\section{Introduction}\label{sec:introduction}

In this paper, we consider the problem of scheduling jobs with OR-precedence constraints on uniform parallel machines to minimize the total length of the project.
Let $[n] := \{1,\dots,n\}$ be the set of jobs and $m$ be the number of machines.
Each job $j \in [n]$ is associated with a processing time $p_j \geq 0$ and a release dates $r_j \geq 0$.
The precedence constraints are given by a directed graph $G = ([n],E)$.
The set of \emph{predecessors} of a job $j \in [n]$ is $\mathcal{P}(j) = \{i \in [n] \, | \, (i,j) \in E\}$.

A \emph{schedule} is an assignment of the jobs in $[n]$ to the machines such that (i) each job $j$ is processed by a machine for $p_j$ units of time, and (ii) each machine processes only one job at a time.
Depending on the problem definition, jobs may be allowed to preempt and continue on a different machine (\emph{preemptive scheduling}) or not (\emph{non-preemptive scheduling}).
The \emph{start time} and \emph{completion time} of job $j \in [n]$ are denoted by $S_j$ and $C_j$, respectively.
Note that $C_j \geq S_j + p_j$ and equality holds if job $j \in [n]$ is not preempted.

A schedule is called \emph{feasible}, if (i) $S_j \geq \min\{C_i \, | \, i \in \mathcal{P}(j)\}$, and (ii) $S_j \geq r_j$ for all jobs $j \in [n]$.
A job without predecessors may start at any point in time $t \geq r_j$.
In other words, every job with predecessors requires that \emph{at least one} of its predecessors is completed before it can start, and no job may start before it gets released.
A job $j$ is called \emph{available} at time $t \geq 0$, if $t \geq r_j$ and, unless $\mathcal{P}(j) = \emptyset$, there is $i \in \mathcal{P}(j)$ with $C_i \leq t$.
Our goal is to determine a feasible schedule that minimizes the \emph{makespan}, which is defined as $C_{\max} := \max_{j \in [n]} C_j$.
In an extension of the notation in~\cite{Johannes2005} and the three-field notation of Graham et al.~\cite{GrahamLawlerLenstraKan1979}, the preemptive and non-preemptive variant of this problem are denoted by $P \, | \, r_j, \, or\text{-}prec, \, pmtn \, | \, C_{\max}$ and $P \, | \, r_j, \, or\text{-}prec \, | \, C_{\max}$, respectively.

From now on we assume w.l.o.g.~that all processing times and release dates of jobs in $[n]$ are positive and non-negative integers, respectively.
Note that this can be done by suitable scaling and that any job with zero processing time may be disregarded.
As discussed below, the non-preemptive problem is NP-hard, which is why we are interested in approximation algorithms.
Let $\Pi$ be a minimization problem, and $\rho \geq 1$.
Recall that a $\rho$-approximation algorithm for $\Pi$ is a polynomial-time algorithm that returns a feasible solution with objective value at most $\rho$ times the optimal objective value.

\paragraph*{Non-Preemptive Scheduling.}
Garey and Johnson~\cite{GareyJohnson1978} proved that the non-preemptive variant is already strongly NP-hard in the absence of precedence constraints and release dates.
It remains NP-hard, even if the number of machines is fixed to $m=2$~\cite{LenstraKanBrucker1977}.
In his seminal paper, Graham~\cite{Graham1966} showed that a simple algorithm called \emph{List Scheduling} achieves an approximation guarantee of $2$:

\begin{center}
\emph{Consider the jobs in arbitrary order. Whenever a machine is idle, execute the next available job in the order on this machine. If there is no available job, then wait until a job completes.}
\end{center}

If the jobs are sorted in order of non-increasing processing times, then List Scheduling is a $\frac{4}{3}$-approximation~\cite{Graham1969}.
Hochbaum and Shmoys~\cite{HochbaumShmoys1988} presented a $(1+\varepsilon)$-approximation for $P \, | \, | \, C_{\max}$, which was improved in running time to the currently best-known by Jansen~\cite{Jansen2010}.
Mnich and Wiese~\cite{MnichWiese2015} showed that $P \, | \, | \, C_{\max}$ is fixed parameter tractable with parameter $\max_{j \in [n]} p_j$.
If we add non-trivial release dates, then List Scheduling with an arbitrary job order is a 2-approximation~\cite{HallShmoys1989}, and it is $\frac{3}{2}$-approximate if the jobs are sorted in order of non-increasing processing times~\cite{ChenVestjens1997}.
Hall and Shmoys~\cite{HallShmoys1989} provided a $(1+\varepsilon)$-approximation for $P \, | \, r_j \, | \, C_{\max}$.

In contrast to OR-precedence constraints that are considered in this paper, the standard precedence constraints, where each job requires that \emph{all} its predecessors are completed, will be called \emph{AND-precedence constraints}.
Minimizing the makespan with AND-precedence constraints is strongly NP-hard, even if the number of machines is fixed to $m = 2$ and the precedence graph consists of disjoint paths~\cite{DuLeungYoung1991}.
List Scheduling is still 2-approximate in the presence of AND-precedence constraints if the order of the jobs is consistent with the precedence constraints~\cite{Graham1966,Graham1969}.
The approximation factor can also be preserved for non-trivial release dates~\cite{HallShmoys1989}.
Assuming a variant of the Unique Games Conjecture~\cite{Khot2002} together with a result of Bansal and Khot~\cite{BansalKhot2009}, Svensson~\cite{Svensson2010} proved that this is essentially best possible.

If the precedence constraints are of AND/OR-structure and the precedence graph is acyclic, then the problem without release dates still admits a $2$-approximation algorithm~\cite{GilliesLiu1995}.
Erlebach, K\"a\"ab and M\"ohring~\cite{ErlebachKaabMohring2003} showed that the assumption on the precedence graph is not necessary.
Both results first transform the instance to an AND-precedence constrained instance by fixing a predecessor of the OR-precedence constraints.
Then they solve the resulting instance with AND-precedence constraints using List Scheduling.
Our first result shows that the makespan of \emph{every} feasible schedule without unnecessary idle time on the machines is at most twice the optimal makespan, even if non-trivial release dates are involved.

\begin{theorem}\label{thm:2approximation}
List Scheduling is a $\left(2 - \frac{1}{m} \right)$-approximation for $P \, | \, r_j, \, or\text{-}prec \, | \, C_{\max}$.
\end{theorem}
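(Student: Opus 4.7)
The plan is to adapt Graham's chain-based list-scheduling argument to the OR-precedence setting, handling release dates simultaneously. Fix a list schedule $\sigma$ with makespan $C$, and let $j_1$ be a job with $C_{j_1} = C$. Construct a chain $j_1, j_2, \ldots, j_k$ of jobs inductively: given $j_l$, if $\mathcal{P}(j_l) = \emptyset$ or $r_{j_l} \geq \min\{C_i : i \in \mathcal{P}(j_l)\}$, terminate with $k := l$; otherwise set $j_{l+1} := \argmin_{i \in \mathcal{P}(j_l)} C_i$, the OR-predecessor of $j_l$ that finishes earliest in $\sigma$. By construction, $j_l$ becomes available in $\sigma$ at time $A_l := C_{j_{l+1}}$ for $l < k$ and $A_k := r_{j_k}$.

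The structural observation is a ``forced busy'' property: during each waiting interval $[A_l, S_{j_l})$, job $j_l$ is available but not yet started, so the list-scheduling assumption forces all $m$ machines to be busy. Furthermore, none of them is processing a chain job, since $j_{l+1}, \ldots, j_k$ have already completed by $A_l$ and $j_l, \ldots, j_1$ have not yet started. Letting $P_{\text{chain}} := \sum_{l=1}^{k} p_{j_l}$ and $P_{\text{total}} := \sum_{j \in [n]} p_j$, summing yields $m \sum_{l=1}^{k}(S_{j_l} - A_l) \leq P_{\text{total}} - P_{\text{chain}}$. Combined with the telescoping identity $C = r_{j_k} + \sum_l (S_{j_l} - A_l) + P_{\text{chain}}$ this gives
\[
C \;\leq\; r_{j_k} \;+\; \Bigl(1 - \tfrac{1}{m}\Bigr)\, P_{\text{chain}} \;+\; \tfrac{1}{m}\, P_{\text{total}}.
\]

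To close the proof I invoke two lower bounds on $C^*$: the immediate load bound $P_{\text{total}}/m \leq C^*$, and the chain bound $r_{j_k} + P_{\text{chain}} \leq C^*$; combining these with the displayed inequality and simplifying produces the desired $(2 - 1/m)$-factor guarantee. The main obstacle is the chain bound, which is where the OR setting genuinely differs from Graham's classical AND analysis: in the AND case the chain must be sequential in every feasible schedule, whereas in the OR case the optimum may enable each $j_l$ through a predecessor distinct from $j_{l+1}$. I plan to establish the chain bound by a careful induction along the chain, using that $r_{j_k} + p_{j_k} \leq C^*$ serves as the base case (either $\mathcal{P}(j_k) = \emptyset$, or the release date $r_{j_k}$ dominates all predecessor completions) and that the specific choice $j_{l+1} = \argmin_{i \in \mathcal{P}(j_l)} C_i$ reflects the shortest OR-path in the precedence graph from $j_l$ backwards, so that the accumulated release time and processing times along the algorithm's chain lower-bound the earliest feasible completion time of $j_1$ in any schedule, which is at most $C^*$.
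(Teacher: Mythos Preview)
There is a genuine gap: your chain bound $r_{j_k}+P_{\text{chain}}\le C^*$ is false in general, and the inductive plan you sketch cannot establish it. The choice $j_{l+1}=\argmin_{i\in\mathcal P(j_l)}C_i$ is an argmin over completion times in the \emph{list schedule} $\sigma$; it has no reason to pick the predecessor lying on a shortest OR-path in the precedence graph. Here is a concrete counterexample with $m=2$ and all release dates zero. Take $a$ with $p_a=100$, $c$ with $p_c=1$, $b$ with $p_b=1$ and $\mathcal P(b)=\{c\}$, $j$ with $p_j=10$ and $\mathcal P(j)=\{a,b\}$, and filler jobs $x_1,\dots,x_{100}$ with unit processing and no predecessors. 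With list order $a,x_1,\dots,x_{100},c,b,j$, List Scheduling runs $a$ on one machine and the $x_i$ on the other during $[0,100]$; at time $100$ both $c$ and $j$ (enabled via $a$) start, so $C_a=100$, $C_b=102$, $C_j=110=C_{\max}$. Your chain from $j$ is $(j,a)$ with $r_{j_k}+P_{\text{chain}}=0+110=110$. But $C^*=106$: put $c,b,j$ on one machine in $[0,12]$ and spread $a$ and the $x_i$ over both machines to finish at $106$ (the load bound $212/2$ is tight). So $110>106=C^*$, and plugging into your displayed inequality gives $C\le 0+\tfrac12\cdot110+\tfrac12\cdot212=161$, which exceeds $(2-\tfrac1m)C^*=159$.

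The paper avoids this by not building the chain from $\sigma$ at all. It takes a \emph{minimal chain} $L$ of the last job in the sense of the earliest start schedule (a purely combinatorial object independent of $\sigma$), for which $mc(k)\le C^*$ holds by definition. It then identifies the dominating release $r_{j_h}$ inside $L$ and argues directly about $\sigma$: outside the set $I_L=[0,r_{j_h}]\cup\bigcup_{q\ge h}[S_{j_q},C_{j_q}]$ all $m$ machines must be busy (since $L$ is a path from the source to $k$, at any idle instant some job of $L$ is running or unreleased). This yields $C_{\max}\le \tfrac1m\sum_j p_j+(1-\tfrac1m)\,mc(k)$, and both summands are at most $C^*$. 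To repair your argument you must replace the $\sigma$-dependent chain by one whose length is provably $\le C^*$; the natural choice is exactly the paper's minimal chain.
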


The proof of Theorem~\ref{thm:2approximation} is contained in Section~\ref{sec:nonpreemptive}.
The key ingredient for proving the performance guarantee is the concept of \emph{minimal chains} that we introduce in Section~\ref{sec:preliminaries}.
Informally the length of the minimal chain of job $j \in [n]$ is the amount of extra time we need to complete $j$.
The minimal chain of $j$ is the set of jobs in $[n] \setminus S$ that have to be processed in order to complete $j$ in that time.

\paragraph*{Preemptive Scheduling.}
If preemptions are allowed the algorithm of McNaughton~\cite{McNaughton1959} computes an optimal schedule in the absence of release dates and precedence constraints.
Ullman~\cite{Ullman1975} showed that the problem with AND-precedence constraints is NP-hard, even if all jobs have unit processing time.
Note that if $p_j = 1$ for all jobs $j$, then there is no benefit in preemption.
This implies that the preemptive problem with AND-precedence constraints is also NP-hard.
However, the preemptive variant becomes solvable in polynomial time for certain restricted precedence graphs.
Precedence graphs that consist of outtrees are of special interest to us, since then AND- and OR-precedence constraints coincide.

A number of polynomial-time algorithms were proposed for AND-precedence constraints in form of an outtree.
Hu~\cite{Hu1961} proposed the first such algorithm for unit processing time jobs, and Brucker, Garey and Johnson~\cite{BruckerGareyJohnson1977} presented an algorithm that can also deal with non-trivial release dates.
Muntz and Coffman~\cite{MuntzCoffman1970} gave a polynomial-time algorithm, if preemptions are allowed.
The algorithm of Gonzalez and Johnson~\cite{GonzalezJohnson1980} has an asymptotically better running time and uses fewer preemptions than the one in~\cite{MuntzCoffman1970}.
Finally Lawler~\cite{Lawler1982} proposed a polynomial-time algorithm for the preemptive variant that can deal with non-trivial release dates, if the precedence graph consists of outtrees.\footnote{Note that Lawler's algorithm~\cite{Lawler1982} generalizes those of~\cite{Hu1961,MuntzCoffman1970,BruckerGareyJohnson1977,GonzalezJohnson1980}.}

For general OR-precedence constrained unit processing time jobs, Johannes~\cite{Johannes2005} presented a polynomial-time algorithm that is similar to Hu's algorithm~\cite{Hu1961}.
We improve on this result by analyzing the structure of an optimal solution of $P \, | \, r_j, \, or\text{-}prec, \, pmtn \, | \, C_{\max}$.
More precisly, we show that there is an optimal preemptive schedule where each job is preceded by its minimal chain.
We then exploit this structure to transform the instance into an equivalent AND-precedence constrained instance, where we can apply known algorithms of e.g.~\cite{Hu1961,MuntzCoffman1970,BruckerGareyJohnson1977,GonzalezJohnson1980,Lawler1982}.
Thereby we obtain our second result. The proof is contained in Section~\ref{sec:preemption}.

\begin{theorem}\label{thm:preemption}
$P \, | \, r_j, \, or\text{-}prec, \, pmtn \, | \, C_{\max}$ can be solved to optimality in polynomial time.
\end{theorem}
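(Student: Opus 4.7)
The plan is to follow the two-step roadmap outlined in the introduction. First I would prove a structural lemma: there exists an optimal preemptive schedule in which, for every job $j \in [n]$, every job of the minimal chain of $j$ is completed no later than the start time $S_j$. With this in hand, the second step converts the given OR-precedence instance into an equivalent AND-precedence instance whose precedence graph is an outforest, and the polynomial-time algorithm of Lawler~\cite{Lawler1982} for preemptive makespan minimization with release dates on outforests can be invoked.

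For the structural lemma I would use an exchange argument. Starting from an arbitrary optimal preemptive schedule $\sigma$, process the jobs in the order of their completion times in $\sigma$. For each job $j$, the OR-constraint of $j$ is satisfied in $\sigma$ by some ``activation chain'' of ancestors of $j$ in $G$ whose last member lies in $\mathcal{P}(j)$ and is completed by time $S_j$. By definition of the minimal chain $M_j$, the minimum amount of processing time required to make $j$ available while respecting release dates is realized by $M_j$. Hence the processing of the old activation chain within the time window $[0, S_j]$ can be replaced preemptively, via McNaughton-style rearrangement on the machines, by the processing of $M_j$ without pushing the start time of $j$ later. Jobs of the old activation chain that are needed by later jobs can still be scheduled after $j$, since the induction is carried out in order of completion times. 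Iterating over all jobs yields the desired optimal schedule.

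Given the structural lemma, I would construct the AND-instance $I'$ on the same job set, processing times, and release dates, but with the following precedence graph: for every non-source job $j$, declare its unique AND-predecessor to be the immediate predecessor of $j$ in $M_j$. The resulting graph is an outforest rooted at the source jobs. Any schedule feasible for $I'$ is trivially feasible for the OR-instance $I$, because the AND-predecessor of $j$ lies in $\mathcal{P}(j)$; conversely, by the structural lemma there is an optimal schedule for $I$ that satisfies all AND-constraints of $I'$. Therefore $I$ and $I'$ have the same optimal makespan, and Lawler's algorithm applied to $I'$ yields an optimal schedule for $I$ in polynomial time.

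The main obstacle is the exchange argument. The delicate point is that the activation chain of $j$ used by $\sigma$ may share jobs with activation chains of other jobs $k$ completed after $j$, so a naive swap can jeopardize feasibility for $k$. Processing jobs in order of completion times, together with the global optimality guarantee of the minimal chain rather than a local swap, is what I would use to route around this difficulty; the formal definition of the minimal chain developed in Section~\ref{sec:preliminaries} is what makes the argument rigorous.
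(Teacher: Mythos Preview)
Your overall plan---prove a structural lemma about minimal chains, reduce to an AND-instance whose precedence graph is an outforest, and invoke Lawler's algorithm---is exactly the paper's approach, and the reduction step is correct as stated. One detail you gloss over is that the paper fixes a \emph{closed} collection of minimal chains (meaning $i\in L_j$ implies $L_i\subseteq L_j$); this is what makes the resulting graph $G'$ an outtree rather than merely a functional graph, though your argument via strictly decreasing $mc(\cdot)$ along chains also rules out cycles.

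The genuine gap is in your exchange argument for the structural lemma. A ``McNaughton-style rearrangement'' does not respect precedence constraints or release dates \emph{within} the minimal chain you are trying to insert, so it is unclear how you can place $M_j\setminus\{j\}$ sequentially into the (possibly fragmented) machine time freed up by removing the old activation chain. Moreover, the old activation chain of $j$ may share jobs with minimal chains $M_k$ of jobs $k$ already handled in your induction, so removing or relocating those jobs can destroy the invariant you have established for $k$; and pushing displaced activation-chain jobs ``after $j$'' may increase the makespan, since a volume argument alone does not suffice in the presence of precedence and release dates. You correctly flag this as the delicate point, but the sketch does not resolve it.

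The paper's proof of the structural lemma (Lemma~\ref{lem:cmax:preemption:structureofOPT}) uses a different mechanism: it discretizes every job into unit-time pieces, defines a potential function $I_C$ counting \emph{inversions} (pairs $i\in L_j$ with $C_i\ge C_j$), and takes an optimal schedule minimizing $I_C$. A local swap lemma (Lemma~\ref{lem:cmax:swap}) shows that two adjacent unit-time jobs can be exchanged without increasing makespan or inversions, provided the earlier slot is full ($m$ jobs) and the predecessor-in-the-chain is already two steps ahead; a pigeonhole on the $m$ candidates versus the at most $m-1$ parallel jobs guarantees a safe swap partner. Repeated application moves each chain job $j_q$ down to time $mc(j_q)$, strictly decreasing $I_C$ and yielding a contradiction. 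This local-swap-plus-potential technique is what replaces your global rearrangement and is the missing ingredient.
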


Since there is no need to preempt if $p_j = 1$ for all $j \in [n]$, we immediately obtain the following corollary. This generalizes the aforementioned result of~\cite{Johannes2005}.

\begin{corollary}\label{cor:unitprocessingtime}
$P \, | \, r_j, \, or\text{-}prec, \, p_j = 1 \, | \, C_{\max}$ can be solved to optimality in polynomial time.
\end{corollary}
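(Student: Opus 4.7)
The plan is to derive this from Theorem~\ref{thm:preemption} by arguing that preemption gains nothing when every processing time equals $1$. First I would invoke Theorem~\ref{thm:preemption} to compute an optimal preemptive schedule $\sigma$ of makespan $T^*$ in polynomial time; since any feasible non-preemptive schedule is also feasible as a preemptive schedule, $T^*$ is a lower bound on the optimal non-preemptive makespan. It then suffices to exhibit a non-preemptive schedule of makespan $T^*$ that respects the release dates and OR-precedence constraints.

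The core step is to build such a non-preemptive schedule from $\sigma$. Since release dates and processing times are integer (by the standing assumption, with $p_j = 1$ for all $j$), I would first assume without loss of generality that all preemptions and machine switches in $\sigma$ occur only at integer times, so that each job $j$ has an integer completion time $\tilde C_j \in \{1,\dots,T^*\}$ in $\sigma$. I then schedule each job $j$ non-preemptively in the unit slot $[\tilde C_j - 1, \tilde C_j]$ on one of the $m$ machines. The number of jobs completing at a common integer time $t$ is at most $m$, because in $\sigma$ each such job receives some processing in $[t-1,t]$ and there are only $m$ machines available in that interval, so the assignment to machines is feasible.

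The release date condition $\tilde C_j - 1 \geq r_j$ is immediate, since $j$ receives its unit of processing in $\sigma$ entirely within $[r_j, \tilde C_j]$. For OR-precedence, the key observation is that in $\sigma$ the job $j$ must be available at the start of its last processing interval $[\tilde C_j - 1, \tilde C_j]$, so either $\mathcal{P}(j) = \emptyset$ or some predecessor $i \in \mathcal{P}(j)$ completes by time $\tilde C_j - 1$ in $\sigma$; this completion time of $i$ equals $\tilde C_i$ in the new schedule, so the OR-precedence constraint at $j$ continues to hold. The main technical point — and essentially the only place where something could go wrong — is the bound on the number of jobs completing simultaneously, which follows from counting processing in the slot ending at that common completion time. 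Combining these observations, the non-preemptive schedule has makespan $T^*$, matching the lower bound, and is produced in polynomial time from the output of Theorem~\ref{thm:preemption}, which establishes the corollary.
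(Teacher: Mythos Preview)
Your proposal is correct and follows the same route as the paper: derive the corollary from Theorem~\ref{thm:preemption} by observing that with integer release dates and $p_j=1$ one may restrict preemptions to integer times, which in fact forces each job to occupy a single unit slot and makes the optimal preemptive schedule non-preemptive. The paper states this equivalence in a single sentence (``since there is no need to preempt''), while you spell out the conversion explicitly; the arguments are otherwise identical.
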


%==========================================================================================================

\section{Preliminaries and Minimal Chains}\label{sec:preliminaries}

In order to simplify some arguments, we introduce a dummy job $s$ with $p_s = r_s = 0$ that shall precede all jobs.
That is, we assume that the set of jobs is $N = [n] \cup \{s\}$, and introduce an arc $(s,j)$ for all $j \in [n]$ with $\mathcal{P}(j) = \emptyset$ in the precedence graph $G$.
Note that there is a feasible schedule, if and only if every job $j \in [n]$ is reachable from $s$ in $G = (N,E)$. %, i.e.~$s \in \mathcal{A}(j)$ for all $j \in [n]$.
In particular, we can decide in linear time, e.g.~via breadth-first-search, whether there exists a feasible schedule.
Henceforth, we will assume that the instances we consider admit a feasible schedule.

Note that $P \, | \, or\text{-}prec \, | \, C_{\max}$ is a generalization of $P \, | \, | \, C_{\max}$ which is already strongly NP-hard~\cite{GareyJohnson1978}.
If $G$ is an outtree rooted at $s$, then OR- and AND-precedence constraints are equivalent.
The NP-hardness result of Du, Leung and Young~\cite{DuLeungYoung1991} implies that the problem remains strongly NP-hard, even if the number of machines is fixed.

\begin{observation}
$Pm \, | \, or\text{-}prec \, | \, C_{\max}$ is strongly NP-hard for all $m \geq 2$.
\end{observation}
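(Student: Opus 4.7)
The plan is to reduce from the theorem of Du, Leung and Young~\cite{DuLeungYoung1991}, which states that $P2 \, | \, prec \, | \, C_{\max}$ is strongly NP-hard even when the AND-precedence graph is a disjoint union of paths. First I would take such a hard instance and attach the dummy source $s$ with an arc into the initial vertex of every path; the resulting precedence graph $G$ is a single outtree rooted at $s$. Because every non-root vertex of an outtree has exactly one predecessor, the AND- and OR-precedence constraints induced by $G$ coincide -- ``at least one predecessor has completed'' and ``all predecessors have completed'' refer to the same single job. Therefore the hard instance is verbatim an instance of $P2 \, | \, or\text{-}prec \, | \, C_{\max}$, which settles the case $m = 2$.

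To lift the hardness from $m = 2$ to arbitrary fixed $m \geq 3$, I would apply a standard padding argument. Given a hard $P2$ instance with makespan threshold $T$, I would introduce $m - 2$ fresh independent jobs, each with processing time $T$, release date $0$, and no incident precedence arcs. In any $m$-machine schedule of makespan at most $T$ these large jobs must each occupy a machine of their own (two of them on the same machine would already consume time $2T$), so the original jobs are forced onto the remaining two machines with makespan at most $T$; the converse direction is trivial. Since Du, Leung and Young's construction yields a threshold $T$ polynomially bounded in the input size, the padding is polynomial and preserves the unary encoding.

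The only point that requires a moment of thought is the equivalence of OR- and AND-precedence on an outtree once the dummy source $s$ has been introduced; the padding step and the transfer of unary encoding are routine. I do not foresee any genuine obstacle, which matches the fact that the claim is stated as an observation rather than a theorem.
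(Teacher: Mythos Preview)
Your proposal is correct and matches the paper's argument: the paper simply notes that on an outtree (obtained by attaching the dummy source $s$ to disjoint paths) OR- and AND-precedence coincide, and then cites Du, Leung and Young~\cite{DuLeungYoung1991}. You additionally spell out the standard padding with $m-2$ length-$T$ jobs to go from $m=2$ to arbitrary fixed $m$, which the paper leaves implicit; this extra detail is sound and does not change the approach.
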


In order to analyze the performance of our algorithms, we use the concept of so-called \emph{minimal chains}.
Informally, a minimal chain of a job $k$ is a set of jobs that need to be scheduled so that $k$ can complete as early as possible.
To define minimal chains properly, we use the notion of an \emph{earliest start schedule}~\cite{ErlebachKaabMohring2003,MohringSkutellaStork2004,Johannes2005}.
Although these schedules are well-defined for general AND/OR-scheduling, we only need and define them in the OR-scheduling context.

The \emph{earliest start schedule} is defined as a schedule on an infinite number of machines such that $(i)$ a job $j$ without predecessors starts at time $r_j$, and $(ii)$ a job $j$ with $\mathcal{P}(j) \not= \emptyset$ starts at time $\max\{r_j,\min\{C_i \, | \, i \in \mathcal{P}(j)\}\}$.
Clearly, an earliest start schedule respects the OR-precedence constraints of the instance, since every job is preceded by at least one of its predecessors according to $(ii)$.
Also, the completion time of a job in any feasible schedule on $m$ machines is bounded from below by its completion time in the earliest start schedule.
That is, if $C_j$ denotes the completion time of job $j$ in the earliest start schedule, the optimum makespan satisfies $C^*_{\max} \geq \max\{C_j \,| \, j \in N\}$.
Note that an earliest start schedule is not necessarily unique, but the start and completion times of all jobs are fix.
Earliest start schedules can be constructed in polynomial time by iteratively scheduling every job as early as possible~\cite{ErlebachKaabMohring2003}.

Let $k \in N$ and let $C_j$ be the completion time of $j \in N$ in the earliest start schedule.
A set $L \subseteq N$ is called \emph{minimal chain of $k \in N$} if $L \in \mathcal{S}$ is inclusion-minimal such that $k \in L$ and $\max_{j \in L} C_j = C_k$.
The set of minimal chains of $k$ is denoted by $\mathcal{MC}(k)$, and the \emph{length of the minimal chain of $k$} is $mc(k) := C_k$.

We can construct a minimal chain of $k$ by iteratively tracing back predecessors that delay job~$k$ in the earliest start schedule.
That is, starting at $k$, we mark one of its predecessors $j$ with $C_j = S_k$, and then proceed with $j$ in the same manner, i.e., we mark a predecessor $i$ of $j$ with $C_i = S_j$, and so on, until we reach a job $i'$ that starts at its release date.
If $i'$ has no predecessors, we are done.
If $\mathcal{P}(i') \not= \emptyset$, we mark a predecessor $j'$ of $i'$ with $C_{j'} \leq S_{i'}$, and continue with $j'$ as described above.
The marked jobs now correspond to a minimal chain of~$k$.
That is, a minimal chain $L = \{j_1,\dots,j_\ell\} \in \mathcal{MC}(k)$ is a path in $G$ with $\mathcal{P}(j_1) = \emptyset$, $j_q \in \mathcal{P}(j_{q+1})$ for all $q \in [\ell -1]$ and $j_\ell = k$ such that $S_{j_1} = r_{j_1}$ and $S_{j_q} = \max\{r_{j_q},C_{j_{q-1}}\}$ for all $2 \leq q \leq \ell$.
We call $j_q$ \emph{the predecessor of $j_{q+1}$ in $L$} for $q \in [\ell -1]$ and denote this by $\mathcal{P}_L(j_{q+1}) := \{j_q\}$.
A job $j_h \in L$ is said to \emph{dominate} the minimal chain $L$ if $mc(k) = r_{j_h} + \sum_{q = h}^\ell p_{j_q}$.

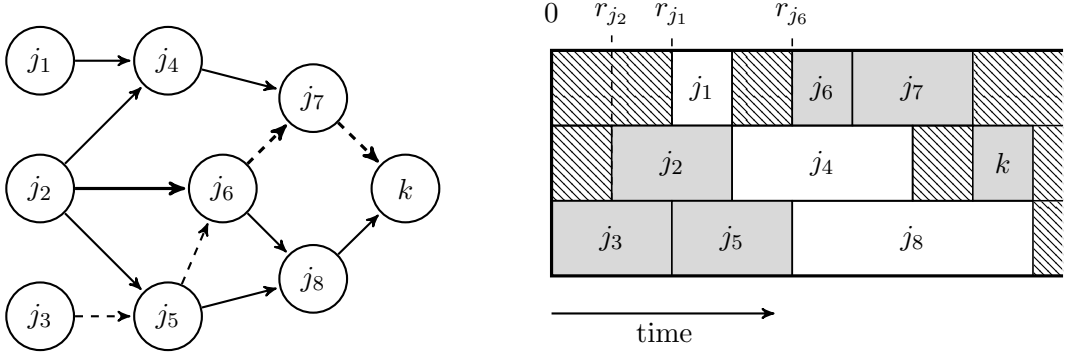
\begin{figure}
\centering
\begin{tikzpicture}[->,>=stealth',shorten >=1pt,auto,node distance=1.7cm,thick]
		% nodes
		\node[draw,thick,circle,minimum size=9mm] (1) {$j_1$};
		\node[draw,thick,circle,minimum size=9mm] (2) [below of=1] {$j_2$};
		\node[draw,thick,circle,minimum size=9mm] (3) [below of =2] {$j_3$};
		\node[draw,thick,circle,minimum size=9mm] (4) [right of=1] {$j_4$};
		\node[draw,thick,circle,minimum size=9mm] (5) [right of=3] {$j_5$};
		\node[draw,thick,circle,minimum size=9mm] (6) [right=1.5cm of 2] {$j_6$};
		\node[draw,thick,circle,minimum size=9mm] (7) [above right of=6] {$j_7$};
		\node[draw,thick,circle,minimum size=9mm] (8) [below right of=6] {$j_8$};
		\node[draw,thick,circle,minimum size=9mm] (k) [right=1.5cm of 6] {$k$};

		\path (1) edge (4)
				(4) edge (7)
				(7) edge[dashed,very thick] (k);
		\path (2) edge[very thick] (6)
				(6) edge (8)
				(8) edge (k);
		\path (2) edge (5)
				(5) edge[dashed] (6)
				(6) edge[dashed,very thick] (7);
		\path (2) edge (4);
		\path (3) edge[dashed] (5)
				(5) edge (8);
						
\end{tikzpicture}
\hspace*{1cm}
\begin{tikzpicture}[semithick]
			\tikzmath{\s = 0.8;} %scaling factor
			
			\coordinate (l1) at (0,2);
			\coordinate (l2) at (0,1);
			\coordinate (l3) at (0,0);
			\coordinate (l4) at (0,-1);
			
			\coordinate (r1) at (8.5*\s,2);
			\coordinate (r2) at (8.5*\s,1);
			\coordinate (r3) at (8.5*\s,0);
			\coordinate (r4) at (8.5*\s,-1);
			
			%draw jobs machine 1
			\draw[pattern=north west lines] (l1) rectangle ($(l2) + (2*\s,0)$);
			\draw ($(l1) + (2*\s,0)$) rectangle ($(l2) + (3*\s,0)$) node[midway] {$j_1$};
			
			\draw[pattern=north west lines] ($(l1) + (3*\s,0)$) rectangle ($(l2) + (4*\s,0)$);
			\draw[fill={gray!30}] ($(l1) + (4*\s,0)$) rectangle ($(l2) + (5*\s,0)$) node[midway] {$j_6$};				

			\draw[white,pattern=north west lines] ($(l1) + (7*\s,0)$) rectangle (r2);
			\draw[fill={gray!30}] ($(l1) + (5*\s,0)$) rectangle ($(l2) + (7*\s,0)$) node[midway] {$j_7$};		
			
			%draw jobs machine 2
			\draw[pattern=north west lines] (l2) rectangle ($(l3) + (1*\s,0)$);
			\draw[fill=gray!30] ($(l2) + (1*\s,0)$) rectangle ($(l3) + (3*\s,0)$) node[midway] {$j_2$};
			
			\draw ($(l2) + (3*\s,0)$) rectangle ($(l3) + (6*\s,0)$) node[midway] {$j_4$};			
			\draw[pattern=north west lines] ($(l2) + (6*\s,0)$) rectangle ($(l3) + (7*\s,0)$);
				
			\draw[white,pattern=north west lines] ($(l2) + (8*\s,0)$) rectangle (r3);
			\draw[fill={gray!30}] ($(l2) + (7*\s,0)$) rectangle ($(l3) + (8*\s,0)$) node[midway] {$k$};			
			
			%draw jobs machine 3
			\draw[fill=gray!30] (l3) rectangle ($(l4) + (2*\s,0)$) node[midway] {$j_3$};
			\draw[fill=gray!30] ($(l3) + (2*\s,0)$) rectangle ($(l4) + (4*\s,0)$) node[midway] {$j_5$};
			
			\draw[white,pattern=north west lines] ($(l3) + (8*\s,0)$) rectangle (r4);
			\draw ($(l3) + (4*\s,0)$) rectangle ($(l4) + (8*\s,0)$) node[midway] {$j_8$};		
						
			%machines and makespan
			\draw[very thick] (r1) -- (l1) -- (l4) -- (r4);
			\draw (l2) -- (r2);
			\draw (l3) -- (r3);
			
			\node at ($(l1) + (0,0.5)$) {$0$};
			\draw[dashed] ($(l2) + (2*\s,0)$) -- ($(l1) + (2*\s,0.2)$) node[above] {$r_{j_1}$};		
			\draw[dashed] ($(l3) + (1*\s,0)$) -- ($(l1) + (1*\s,0.2)$) node[above] {$r_{j_2}$};	
			\draw[dashed] ($(l2) + (4*\s,0)$) -- ($(l1) + (4*\s,0.2)$) node[above] {$r_{j_6}$};	
			
			\draw[arc] (0,-1.5) -- (3,-1.5) node[midway,yshift=-7pt] {time};
			
\end{tikzpicture}
\caption{{\small An instance on nine jobs with processing times $p_{j_1}  =p_{j_6} = p_k = 1$, $p_{j_2} = p_{j_3} = p_{j_5} = p_{j_7} = 2$, $p_{j_4} = 3$, $p_{j_8} = 4$ and release dates $r_{j_1} = 2$, $r_{j_2} = 1$, $r_{j_6} = 4$, $r_j = 0$ for all other jobs~$j$ (left) and an earliest start schedule (right). The set of minimal chains of $k$ is $\mathcal{MC}(k)= \{\{j_2,j_6,j_7,k\},\{j_3,j_5,j_6,j_7,k\}\}$ with $mc(k) = 8$. The chain $\{j_2,j_6,j_7,k\}$ is dominated by $j_6$, and $\{j_3,j_5,j_6,j_7,k\}$ is dominated by jobs $j_3$ and $j_6$. The paths in $G$ that correspond to the minimal chains in $\mathcal{MC}(k)$ are depicted dashed and thick, respectively. Jobs in minimal chains are highlighted in gray.}}
\label{fig:makespan:example:minchain}
\end{figure}

In the following, we denote the completion times in an optimal schedule by $C^*_j$ (for $j \in [n]$) and its makespan by $C^*_{\max}$.
Also, we will sometimes denote an optimal schedule by $C^*$ and the schedule with completion times $C_j$ (for $j \in [n]$) by $C$.
There are two trivial lower bounds on the optimal makespan.
First, any feasible schedule cannot do better than splitting the total processing load equally among all machines, so $C^*_{\max} \geq \frac{1}{m} \sum_{j \in N} p_j$.
Second, every job requires at least one of its predecessors to be completed before it can start.
If we start with an empty schedule, the earliest completion time of job $j$ is by definition equal to the length of its minimal chain w.r.t.~the empty set.
Thus, $C^*_{\max} \geq \max_{j \in N} \ mc(j)$.

%==========================================================================================================

\section{List Scheduling Without Preemptions}\label{sec:nonpreemptive}

Erlebach et al.~\cite{ErlebachKaabMohring2003} presented a 2-approximation algorithm for minimizing the makespan with AND/OR-precedence constraints.
The algorithm transforms the instance to an AND-instance by fixing an OR-predecessor for each job, and then applies List Scheduling.
We show that List Scheduling without transforming the instance is already $2$-approximate for OR-precedence constraints, even with non-trivial release dates.
The proof is similar to~\cite{HallShmoys1989}.
Since we consider OR-precedence constraints, we need the notion of minimal chains to bound the amount of idle time on the machines.
For completeness, we restate Theorem~\ref{thm:2approximation} here.

\begin{theorem*}
List Scheduling is a $\left(2 - \frac{1}{m} \right)$-approximation for $P \, | \, r_j, \, or\text{-}prec \, | \, C_{\max}$.
\end{theorem*}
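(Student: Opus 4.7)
The plan is to adapt Graham's idle-time accounting, as extended to release dates by Hall--Shmoys, to the OR-precedence setting, letting the minimal-chain concept from Section~\ref{sec:preliminaries} play the role that the precedence chain plays in the AND setting. I fix a job $\ell \in [n]$ with $C_\ell = C_{\max}$ in the List Scheduling output and start from the identity $m \cdot C_{\max} = \sum_{j \in [n]} p_j + I$, where $I$ is the total idle machine-time. Since $\sum_j p_j \leq m\,C^*_{\max}$, the task reduces to bounding $I$ in terms of $C^*_{\max}$.

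I then build a chain $k_1 = \ell, k_2, \ldots, k_T$ of OR-predecessors in $G$ by tracing back through the LS schedule. Starting at $k_1 = \ell$, as long as $\mathcal{P}(k_i) \neq \emptyset$ and $r_{k_i} < \min_{j \in \mathcal{P}(k_i)} C_j$, I set $k_{i+1} \in \mathcal{P}(k_i)$ to be a predecessor attaining $C_{k_{i+1}} = \min_{j \in \mathcal{P}(k_i)} C_j$ (the predecessor whose completion makes $k_i$ OR-available in LS); otherwise I terminate the trace with $T = i$. By construction, $k_T$ either has no predecessor or has its release date dominate in LS, so $k_T$ becomes LS-available at $r_{k_T}$, while each $k_i$ with $i < T$ becomes LS-available exactly at $C_{k_{i+1}}$.

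Next I establish the idle-time bound. For every $i < T$, during the interval $[C_{k_{i+1}}, S_{k_i}]$ the job $k_i$ is OR-available yet not scheduled, so the List Scheduling rule forces all $m$ machines to be busy; the analogous statement holds on $[r_{k_T}, S_{k_T}]$. During each chain-processing interval $[S_{k_i}, C_{k_i}]$ one machine is occupied by $k_i$, so at most $m-1$ of the remaining machines may idle. Summing these contributions, the idle time accumulated in $[r_{k_T}, C_\ell]$ is at most $(m-1) \sum_{i=1}^T p_{k_i}$, and the crude bound $m \cdot r_{k_T}$ covers the idle time in $[0, r_{k_T}]$.

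The hardest step, and the place where minimal chains are essential, is to bound the effective chain length $r_{k_T} + \sum_{i=1}^T p_{k_i}$ in terms of $mc(\ell) \leq C^*_{\max}$. In the AND setting this bound is immediate because any feasible schedule must respect the traced chain; for OR-precedence the trace may differ from a minimal chain, since LS may pick a predecessor that is not the one used by a minimal chain of $\ell$. I would overcome this by either modifying the trace to select, at each step, the predecessor that minimizes $mc$ (producing a minimal chain ending at $\ell$ for which $r_{k_T} + \sum p_{k_i} = mc(\ell)$) and re-verifying the all-machines-busy structure via the inductive comparison $mc(j) \leq C_j$; or by invoking the dominating-job characterization $mc(\ell) = r_{j_h} + \sum_{q \geq h} p_{j_q}$ from Section~\ref{sec:preliminaries} and arguing on the sub-chain beginning at the dominating job. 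Once $r_{k_T} + \sum p_{k_i} \leq mc(\ell) \leq C^*_{\max}$ is in hand, combining with $\sum_j p_j \leq m\,C^*_{\max}$ and the displayed idle-time bound yields $m\,C_{\max} \leq (2m-1)\,C^*_{\max}$, proving the claimed $(2 - 1/m)$-approximation.
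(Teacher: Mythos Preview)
Your plan has the right skeleton, but the two places you flag as needing work are genuine gaps that you do not close, and the interval-based idle-time accounting you set up does not survive the fix you propose. The LS-traced chain (choosing at each step the predecessor with minimum LS completion time) need not satisfy $r_{k_T}+\sum_i p_{k_i}\le mc(\ell)$: on one machine with $\mathcal P(a)=\{b,c\}$, $\mathcal P(c)=\{d\}$, $p_b=10$, $p_c=1$, $p_d=100$, $p_a=1$, releases zero and list order $d,c,b,a$, the trace from $a$ goes through $c,d$ with length $102$, while $mc(a)=11$. Your fix (a)---trace instead along the $mc$-minimising predecessor---does yield a minimal chain, but then the decomposition into intervals $[C_{k_{i+1}},S_{k_i}]$ collapses: for that chain one may have $C_{k_{i+1}}>S_{k_i}$ in the LS schedule (because $k_i$ became available through a \emph{different} predecessor), so ``all machines busy on $[C_{k_{i+1}},S_{k_i}]$'' is vacuous and the sum of your interval bounds no longer covers $[r_{k_T},C_\ell]$. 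The sentence ``re-verifying via $mc(j)\le C_j$'' does not supply the missing argument. A smaller issue: even granting the length bound, the crude $m\cdot r_{k_T}$ for idle time on $[0,r_{k_T}]$ leaves an additive $r_{k_T}/m$ in the final inequality.

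The paper sidesteps both problems by fixing a minimal chain $L'=\{j_1,\dots,j_\ell\}$ of the last job \emph{a priori} (from the earliest start schedule, not by tracing LS) and arguing directly that for any path from the source to the last job, at every moment some machine is idle either a job of the path is running or the first not-yet-completed path job is still unreleased---no ordering between $C_{j_{q-1}}$ and $S_{j_q}$ in LS is needed. Release dates are then handled by passing to the dominating job $j_h$ and the sub-chain $L=\{j_h,\dots,j_\ell\}$ (this is your fix (b), actually carried out), so that $\bigl|[0,r_{j_h}]\cup\bigcup_{j\in L}[S_j,C_j]\bigr|\le r_{j_h}+\sum_{q\ge h}p_{j_q}=mc(\ell)$, while outside this set all machines are busy with jobs of $N\setminus L$. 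The leftover $r_{j_h}$ is absorbed by the w.l.o.g.\ assumption that at least one machine runs throughout $[0,r_{j_h}]$ (otherwise LS and the optimum share a dead interval that can be excised), which is precisely what is missing from your ``crude bound''.
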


\begin{proof}
Consider the schedule returned by List Scheduling, and let $S_j$ and $C_j$ be the start and completion time of job $j \in [n]$.
Let $l \in [n]$ be the job that completes last, i.e.~$C_l = C_{\max}$.
Let $I  \subseteq [0,S_l]$ be the union of all time intervals $I_1, \dots, I_b$ where some machine is idle.
If $I = \emptyset$, then all machines are busy before time $S_l$ with jobs in $N \setminus \{l\}$. Hence
\begin{equation*}
C_{\max} = S_l + p_l \leq \frac{1}{m} \sum_{j \not= l} p_j + p_l = \frac{1}{m} \sum_{j \in N} p_j + \left(1- \frac{1}{m} \right) p_l \leq \left(2-\frac{1}{m} \right)C^*_{\max}.
\end{equation*}

So suppose there is idle time, and let $I$ be the union of all intervals in which some machine is idle.
Let $S \subseteq N$ be a set of jobs such that $S$ is a path in the precedence graph from the source $s$ to $k$.
At every point in time $t \in I$, a job in $S$ is either not yet released, or is currently running on some machine.
Otherwise, there is an unscheduled available job in $S$ that can be processed at time~$t$.

Let $L' \in \mathcal{MC}(k)$ be a minimal chain of $k$ and enumerate the jobs $L' = \{j_1,\dots,j_\ell\}$ such that $j_\ell = k$, $\mathcal{P}(j_1) = \emptyset$, and $j_q \in \mathcal{P}(j_{q+1})$ for all $q \in [\ell -1]$.
Recall that $L'$ is a path in $G$, so, at every idle point in time, some job of $L'$ is either being processed or not yet released.
That is, the total idle time is $\abs{I} \leq mc(k)$, but we can even get an even stronger bound.

Let $h \in [\ell]$ be maximal such that $j_h$ dominates the minimal chain $L'$, i.e., $mc(k) = r_{j_h} + \sum_{q = h}^\ell p_{j_q}$.
Let $L := \{j_h,\dots,j_\ell\}$, and consider the points in time $I_L := [0;r_{j_h}] \cup \bigcup_{j \in L} [S_j;C_j]$ when $j_h$ is not yet released or some job in $L$ is being processed.
Note that the intervals $[S_j;C_j]$ for $j \in L$ are not necessarily disjoint, because jobs in $L$ might run in parallel, i.e., $\abs{I_L} \leq r_{j_h} + \sum_{j \in L} p_j$.
W.l.o.g., we can assume that at least one machine is running during $[0;r_{j_h}]$.
Otherwise, if all machines were idle at some point $t \in [0;r_{j_h}]$, then all jobs $j$ with $r_j \leq t$ are already completed at time $t$.
Thus, also in the optimum solution, no machine is running at time $t$, so we can disregard these time slots where no machine is running at all.
That is, during $I_B := [0;C_{\max}] \setminus I_L$, all machines are busy with jobs in $N \setminus L$ and the total processing load of jobs that are running in $I_B$ is less or equal than $\sum_{j \notin L} p_j - r_{j_h}$.
Hence, $\abs{I_B} \leq \frac{1}{m} ( \sum_{j \notin L} p_j - r_{j_h})$ and we obtain

\begin{align*}\
C_{\max}  &= C_k = \abs{I_B} + \abs{I_L} \leq \frac{1}{m} \left( \sum_{j \notin L} p_j - r_{j_h} \right) + r_{j_h} + \sum_{q = h}^\ell p_{j_q} =\\
&=  \frac{1}{m} \sum_{j \in N} p_j + \left( 1-\frac{1}{m}\right) mc(k) \leq \left( 2-\frac{1}{m}\right) C^*_{\max} 
\end{align*}
This proves the claim.
\end{proof}

\begin{corollary}\label{cor:releasedates:singlemachine}
List Scheduling solves $1 \, | \, r_j, \, or\text{-}prec \, | \, C_{\max}$ to optimality.
\end{corollary}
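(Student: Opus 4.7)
The plan is to obtain this corollary as a direct numerical specialization of Theorem~\ref{thm:2approximation} to the single-machine case. Setting $m = 1$ in the approximation guarantee $2 - \tfrac{1}{m}$ gives a ratio of exactly $1$. Hence the theorem yields $C_{\max} \leq 1 \cdot C^*_{\max} = C^*_{\max}$ for the schedule returned by List Scheduling. Combined with the trivial inequality $C^*_{\max} \leq C_{\max}$ (by definition of $C^*_{\max}$ as the minimum makespan over all feasible schedules), this forces $C_{\max} = C^*_{\max}$, so List Scheduling returns an optimal schedule.

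There is no substantial obstacle, as this is essentially a one-line specialization. The only thing worth double-checking is that both cases in the proof of Theorem~\ref{thm:2approximation} behave correctly at the boundary $m=1$. In both the ``no idle time'' and the ``positive idle time'' sub-cases, the coefficient $\left(1 - \tfrac{1}{m}\right)$ in front of $p_l$ and of $mc(k)$ collapses to zero, and the load coefficient $\tfrac{1}{m}$ becomes $1$, so the bounds reduce to $C_{\max} \leq \sum_{j \in N} p_j$ on the ``non-forced'' portion of the timeline. The trivial lower bound $C^*_{\max} \geq \sum_{j \in N} p_j$ on a single machine then closes the argument. Intuitively, on one machine the minimal-chain lower bound plays no role: the only obstruction to optimality is unnecessary idle time, and List Scheduling never introduces any, which is exactly what the corollary records.
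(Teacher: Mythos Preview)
Your proposal is correct and matches the paper's approach: the corollary is stated immediately after Theorem~\ref{thm:2approximation} with no separate proof, so the intended argument is precisely the one-line specialization $m=1$ that you give in your first paragraph. The additional sanity check in your second paragraph is not needed (the theorem already yields $C_{\max}\le C^*_{\max}$ directly), but it is harmless and correct.
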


%==========================================================================================================

\section{A Polynomial-Time Algorithm for Special Cases}\label{sec:preemption}

In this section, we consider the preemptive problem $P \, | \, r_j, \, or\text{-}prec, \, pmtn \, | \, C_{\max}$ and prove Theorem~\ref{thm:preemption}. 
Recall that all processing times and release dates of jobs in $[n]$ are positive and non-negative integers, respectively.
So preemptive and non-preemptive scheduling of unit processing time jobs are equivalent, since there is no need to preempt, which proves Corollary~\ref{cor:unitprocessingtime}.

In contrast to the non-preemptive instance, an optimal preemptive schedule will never have idle time, if there are available jobs.
Without preemption, it could make sense to wait for some job $j$ to finish (i.e.~have idle time), although there is an available job $k$.
The reason might be that we want to process a successor $i$ of  $j$ rightaway.
However, if we allow preemption, then we could just schedule a fraction of $k$, and once $j$ completes, we preempt $k$ and process $i$.

We first derive some necessary notation, and then present a polynomial-time algorithm that computes an optimal preemptive schedule. 
Fix $L_j \in \mathcal{MC}(j)$ for all $j \in N$.
The collection of minimal chains $\{L_j \, | \, j \in N\}$ is called \emph{closed}, if $i \in L_j$ implies $L_i \subseteq L_j$ for all $j \in N$.
Note that we can always choose $L_i \subseteq L_j$ for all $i \in L_j$, since (informally) subpaths of shortest paths are shortest paths.
Hence, if we compute minimal chains $L_1,\dots,L_n$ using the procedure described in Section~\ref{sec:preliminaries}, we may assume that $\{L_1,\dots,L_n\}$ is closed.
We say an arc $(i,j) \in E$ is \emph{in line with the minimal chain} $L_j$ if $i \in L_j$.
Recall that all processing times are strictly positive and $L_j \in \mathcal{MC}(j)$. So if $(i,j) \in E$ is in line with $L_j$ then $i \in \mathcal{P}(j)$.

Our algorithm, which we refer to as \textsc{AlgoPmtn}, works as follows.
First, compute a closed collection of minimal chains $\{L_j \, | \, j \in N\}$.
Then, transform the instance to an instance with AND-precedence constraints by deleting all arcs that are not in line with $L_1,\dots,L_n$.
(Note that the resulting graph $G'$ is an outtree.)
Now, apply a polynomial-time algorithm for the resulting AND-instance to compute an optimal preemptive schedule.
(Recall that we can compute optimal preemptive schedules for these special cases in polynomial time, see e.g.~\cite{Hu1961,MuntzCoffman1970,BruckerGareyJohnson1977,GonzalezJohnson1980,Lawler1982}.
We use the algorithm of Lawler~\cite{Lawler1982}, but instead, depending on the setting, we could also use any of the other algorithms.)

We prove that \textsc{AlgoPmtn} works correctly by analyzing the structure of an optimal preemptive schedule.
More precisly, we show that for any closed collection of minimal chains, there is an optimal preemptive schedule that is feasible for the transformed graph $G'$.
Before we are able to prove Theorem~\ref{thm:preemption}, we need some additional notation.

If jobs are allowed to preempt, we need to ``keep track'' how much of the minimal chain of a job is already processed at every point in time. 
To formalize this, we split every job $j \in [n]$ into $p_j$ jobs $j_1,\dots,j_{p_j}$ of unit processing time. 
The predecessors of these jobs are $\mathcal{P}(j_1) = \{i_{p_i} \ | \ (i,j) \in E \}$ and $\mathcal{P}(j_u) = \{j_{u-1}\}$ for all $2 \leq u \leq p_j$.
The release dates are $r_{j_u} = r_{j}$ for all $j \in N$ and $u \in [p_j]$.
As before, we add a dummy job $s$ with $p_s = r_s = 0$ and $\mathcal{P}(j_1) = \{s\}$ if $\mathcal{P}(j) = \emptyset$ for $j \in [n]$.
We refer to this instance as the \emph{preemtive instance} and denote the set of jobs by $N^{(p)}$.

Note that, if all jobs have unit processing time, then $N^{(p)} = N$.
We informally extend definition of $mc(k)$ to \emph{fractions of jobs} via the original definition on the preemptive instance.
Note that (the lengths of) all minimal chains coincide with the non-preemptive instance.
In particular, all lower bounds on the makespan are still valid, and $i \in L_j$ implies $i_{1},\dots,i_{p_i} \in L_{j_u}$ for all $u \in [p_j]$.
Since minimal chains in the non-preemptive and preemptive instance coincide, $\{L_j \, | \, j \in N^{(p)}\}$ is closed iff $\{L_j \, | \, j \in N\}$ is closed.
Two distinct jobs $i, j \in N^{(p)}$ are called \emph{inverted} w.r.t.~the closed collection of minimal chains $\{L_k \ | \ k \in N^{(p)}\}$ in the schedule $C$, if $i \in L_j$ and $C_i \geq C_j$.
Let $I_{C}$ be the number of inversions in the schedule $C$.

Lemma~\ref{lem:cmax:swap} describes a procedure that swaps two jobs $k,l \in N^{(p)}$ that are scheduled consecutively.
We will apply this procedure to show that there always exists an optimal solution without inversions (see Lemma~\ref{lem:cmax:preemption:structureofOPT}).
For the notation of Lemma~\ref{lem:cmax:swap}, we forget about release dates, i.e.~consider schedules for $P \, | \, or\text{-}prec, \, pmtn \, | \, C_{\max}$.
We describe how to incorporate release dates in the proof of Lemma~\ref{lem:cmax:preemption:structureofOPT}, which is the key lemma for the correctness of \textsc{AlgoPmtn}.

\begin{lemma}\label{lem:cmax:swap}
Let $\{L_j \ | \ j \in N^{(p)}\}$ be a closed collection of minimal chains and $C^*$ be a feasible preemptive schedule.
Let $i \in N^{(p)}$ with $C^*_i \geq 2$, and let $S_i = \{j \in N^{(p)} \, | \, C^*_{j} = C^*_i -1\}$ be the jobs scheduled directly before~$i$.
Assume that $\abs{S_i}= m$ and $C^*_j \leq C^*_i -2$ for $j \in \mathcal{P}_{L_i}(i)$.\footnote{So moving $i$ to $[C^*_i -2;C^*_i -1]$ does not violate its precedence constraints or cause an inversion.}
Then there is $k \in S_i$ such that swapping $i$ and $k$, i.e., setting $C'_{i} = C^*_{i} -1 = C^*_{k}$, $C'_{k} = C^*_{k} +1 = C^*_{i}$ and $C'_j = C^*_j$ for all $j \in N^{(p)} \setminus \{i,k\}$, yields a feasible schedule with $C'_{\max} = C^*_{\max}$ and $I_{C'} \leq I_{C^*}$.
\end{lemma}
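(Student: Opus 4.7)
The plan is to select $k \in S_i$ so that the resulting schedule is feasible, preserves the makespan, and does not increase the inversion count. Makespan invariance is immediate: $C'_i = C^*_i - 1 < C^*_i \leq C^*_{\max}$, $C'_k = C^*_i \leq C^*_{\max}$, and every other completion time is unchanged, so $C'_{\max} = C^*_{\max}$.

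For feasibility I would first note that moving $i$ to start at $C^*_i - 2$ respects $i$'s OR-precedence, because the hypothesis exhibits $\mathcal{P}_{L_i}(i) \in \mathcal{P}(i)$ whose completion is at most $C^*_i - 2$. Moving $k$ to $[C^*_i - 1, C^*_i]$ only risks the OR-precedence of successors of $k$ scheduled in the slot $T_i := \{j \in N^{(p)} : C^*_j = C^*_i\}$, since successors scheduled later still satisfy $S^*_j \geq C'_k = C^*_i$, and successors scheduled earlier never relied on $k$. For each $j \in T_i \setminus \{i\}$ pick an enabling predecessor $\beta(j) \in \mathcal{P}(j)$ with $C^*_{\beta(j)} \leq C^*_i - 1$, preferring $C^*_{\beta(j)} \leq C^*_i - 2$ whenever available. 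The \emph{unsafe} candidates then lie in $\beta(T_i \setminus \{i\}) \cap S_i$, which has size at most $|T_i| - 1 \leq m - 1 < m = |S_i|$. Hence a feasibility-compatible $k$ always exists, and this count automatically rules out $k = K_u$ for $u < p_K$ whenever $K_{u+1} \in T_i$ (as then $K_u$ is $K_{u+1}$'s sole predecessor in the preemptive graph).

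For the inversion bound I would enumerate the changes. Only pairs containing $i$ or $k$ can change status, giving new inversions at most of the form $(j, i)$ for $j \in L_i \cap S_i$ with $j \neq k$, the pair $(k, i)$ if $k \in L_i$, and $(k, j)$ for $j \in T_i \setminus \{i, k\}$ with $k \in L_j$; and removed inversions at least of the form $(i, k)$ if $i \in L_k$, $(i, j)$ for $j \neq k$ with $i \in L_j$ and $C^*_j = C^*_i$, and $(j, k)$ for $j \in L_k \cap S_i$ with $j \neq k$. Using closedness $i \in L_j \Rightarrow L_i \subseteq L_j$ and its analogue for $k$, I would match new inversions $(k, j)$ with removed $(j, k)$ via the containment $L_k \subseteq L_j$, and the $(j, i)$ creations with $(i, j')$ removals inside the common chain. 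The residual balance reduces to the intrinsic pair $(k, i)$ versus $(i, k)$: whenever some feasibility-compatible $k \notin L_i$ exists, picking it kills the creation of $(k, i)$; if every feasibility-compatible candidate lies in $L_i$, picking $k$ as the chain-latest element of $L_i \cap S_i$ among such candidates makes $L_k \cap S_i$ contain all its chain-ancestors, supplying sufficient removed $(j, k)$ inversions to offset the forced creations.

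The main obstacle will be verifying this compensation in the corner case where every feasibility-compatible $k$ sits inside $L_i \cap S_i$. Closedness of the minimal chain collection is the crucial ingredient: it forces $L_k$ to be a prefix of $L_i$, so that the removed inversions $(j, k)$ with $j \in L_k \cap S_i$ align exactly with the forced creations $(j, i)$ with $j \in L_i \cap S_i$. The technical heart of the proof is verifying that this pairing is bijective up to the self-referential $(k, i)$, which is handled by the chain-maximal choice of $k$; once this is established, $I_{C'} \leq I_{C^*}$ follows and the lemma is proved.
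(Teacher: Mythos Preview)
Your makespan and feasibility arguments are essentially the paper's: a counting argument showing that $|S_i| = m$ exceeds the at most $m-1$ ``blocking'' predecessors of the jobs in $T_i \setminus \{i\}$. The divergence is in the inversion bound, and there your matching scheme has a real gap.

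The paper does not balance created against removed inversions. It simply enlarges the forbidden subset $S' \subseteq S_i$: besides the feasibility-blockers, it also excludes every $\mathcal{P}_{L_j}(j)$ with $j \in J_i := T_i \setminus \{i\}$. Each excluded element is witnessed by some $j \in J_i$, and $|J_i| \le m-1 < m$, so a legal $k \in S_i \setminus S'$ remains. With such a $k$, moving $k$ to $[t,t+1]$ creates no inversion $(k,j)$ for any $j \in J_i$; and the footnoted hypothesis is read as ``moving $i$ one slot earlier creates no inversion'' (in the application in Lemma~\ref{lem:cmax:preemption:structureofOPT} all of $L_i \setminus \{i\}$ already sits before time $t-1$). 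Hence $I_{C'} \le I_{C^*}$ with no bookkeeping at all.

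Your matching breaks precisely in the corner case you highlight. If the chosen $k$ lies in $L_i$, then after you pair each creation $(j,i)$ with the removal $(j,k)$ for $j \in (L_i \cap S_i)\setminus\{k\}$, the creation $(k,i)$ is left uncompensated; you cannot cancel it with $(i,k)$, since $k \in L_i$ and $i \in L_k$ together would force $i=k$. Moreover, your proposed match of a creation $(k,j)$, $j \in J_i$, against a ``removed $(j,k)$ via $L_k \subseteq L_j$'' is backwards: for $j \in J_i$ one has $C^*_j = t+1 > t = C^*_k$ and $C'_j = t+1 = C'_k$, so if $j \in L_k$ the pair $(j,k)$ is an inversion both before and after the swap and is not removed. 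The remedy is not a finer matching but the paper's exclusion strategy: choose $k$ so that no new inversion is created in the first place.
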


\begin{proof}
To shorten notation, set $t := C^*_i -1$.	
Note that the makespan does not change if we swap two unit processing time jobs.
Let $J_i = \{j \in N^{(p)} \setminus \{i\}  \ | \ C^*_j = t+1\}$ be the jobs running in parallel to $i$ on the other machines.
Note that $\abs{J_i} \leq m-1$, and recall that there are $\abs{S_i} = m$ jobs that are being processed directly before $i$.
For $j \in N^{(p)}$, let $\mathcal{A}^*_j := \{j' \in N^{(p)} \, | \, C^*_{j'} < C^*_j\}$ be the set of jobs that complete before $j$ starts.

\begin{figure}
\centering
\begin{tikzpicture}[decoration={brace,amplitude=5pt,mirror}]
		\draw[very thick] (0,0) -- (4,0);
		
		\draw[semithick] (0,-1) -- (1,-1);
		\draw[semithick] (3,-1) -- (4,-1);
		
		\draw[semithick] (0,-2) -- (2,-2);
		\draw[semithick] (3,-2) -- (4,-2);
		
		\draw[semithick] (0,-3) -- (1,-3);
		\draw[semithick] (2,-3) -- (4,-3);
		
		\draw[semithick] (0,-4) -- (1,-4);
		\draw[semithick] (2,-4) -- (4,-4);		
		
		\draw[very thick] (0,-5) -- (4,-5);
		\draw (1,0) -- (1,-5);
		\draw (2,0) -- (2,-5);	
		\draw (3,0) -- (3,-5);
		
		\node at (1,0.5) {$t-1$};
		\node at (2,0.5) {$t$};
		\node at (3,0.5) {$t+1$};		
		
		\draw[gray,dashed] (1,-1) -- (3,-1);
		\draw[gray,dashed] (2,-2) -- (3,-2);
		\draw[gray,dashed] (1,-3) -- (2,-3);		
		\draw[gray,dashed] (1,-4) -- (2,-4);		
		
		\draw[very thick] (1,0) -- (2,0) -- (2,-2) -- (1,-2) -- (1,0);
		\node at (1.5,-1.2) {$S'$};		
		\draw[very thick] (2,0) -- (3,0) -- (3,-3) -- (2,-3) -- (2,0);
		\node at (2.5,-1.2) {$J'$};
		
		\draw[->, thick] (1.5,-0.5) -- (2.5,-0.5);
		\draw[->, thick] (1.5,-1.5) -- (2.5,-1.5);
		\draw[->, thick] (1.5,-1.5) -- (2.5,-2.5);		
				
		\draw[very thick] (1,-2) -- (2,-2) -- (2,-5) -- (1,-5) -- (1,-2);
		\node at (1.5,-2.7) {$S$};
		
		\node at (2.5,-4.5) {$i$};
		\node at (1.5,-3.5) {$k$};
		\draw[decorate,semithick] (1,-5) -- (2,-5) node[midway, yshift =-0.5cm] {$S_i$};
		\draw[decorate,semithick] (2,-5) -- (3,-5) node[midway, yshift =-0.5cm] {$J_i \cup \{i\}$};

		\draw[pattern=north west lines] (2,-3) -- (3,-3) -- (3,-4) -- (2,-4) -- (2,-3);
\end{tikzpicture}
\hspace*{2cm}
\begin{tikzpicture}[decoration={brace,amplitude=5pt,mirror}]
		\draw[very thick] (0,0) -- (4,0);
		\draw[semithick] (0,-1) -- (1,-1);
		\draw[semithick] (3,-1) -- (4,-1);
		\draw[semithick] (0,-2) -- (2,-2);
		\draw[semithick] (3,-2) -- (4,-2);
		\draw[semithick] (0,-3) -- (4,-3);
		\draw[semithick] (0,-4) -- (4,-4);
		\draw[very thick] (0,-5) -- (4,-5);
		\draw (1,0) -- (1,-5);
		\draw (2,0) -- (2,-5);
		\draw (3,0) -- (3,-5);
		
		\draw[gray,dashed] (1,-1) -- (3,-1);
		\draw[gray,dashed] (2,-2) -- (3,-2);
		\node at (1,0.5) {$t-1$};
		\node at (2,0.5) {$t$};
		\node at (3,0.5) {$t+1$};
		\node at (1.5,-1.2) {$S'$};
		\node at (2.5,-1.2) {$J'$};
		\node at (2.5,-4.5) {$k$};
		\node at (1.5,-3.5) {$i$};
		\draw[white,decorate,semithick] (1,-5) -- (2,-5) node[midway, yshift =-0.5cm] {$S_i$};
		\draw[white,decorate,semithick] (2,-5) -- (3,-5) node[midway, yshift =-0.5cm] {$J_i \cup \{i\}$};
		
		\draw[->, thick] (1.5,-0.5) -- (2.5,-0.5);
		\draw[->, thick] (1.5,-1.5) -- (2.5,-1.5);
		\draw[->, thick] (1.5,-1.5) -- (2.5,-2.5);
		
		\draw[very thick] (1,0) -- (2,0) -- (2,-2) -- (1,-2) -- (1,0);
		\draw[very thick] (2,0) -- (3,0) -- (3,-3) -- (2,-3) -- (2,0);
		\draw[pattern=north west lines] (2,-3) -- (3,-3) -- (3,-4) -- (2,-4) -- (2,-3);
\end{tikzpicture}
\caption{{\small Relevant time slots $[t-1;t+1]$ in the initial schedule (left) and final schedule (right), respectively. The arrows indicate that the respective job in $S'$ is the predecessor of the corresponding job in $J'$. In this example, $J = \emptyset$, so $J_i = J'$.}}
\label{fig:makespan:preemption:swap}
\end{figure}
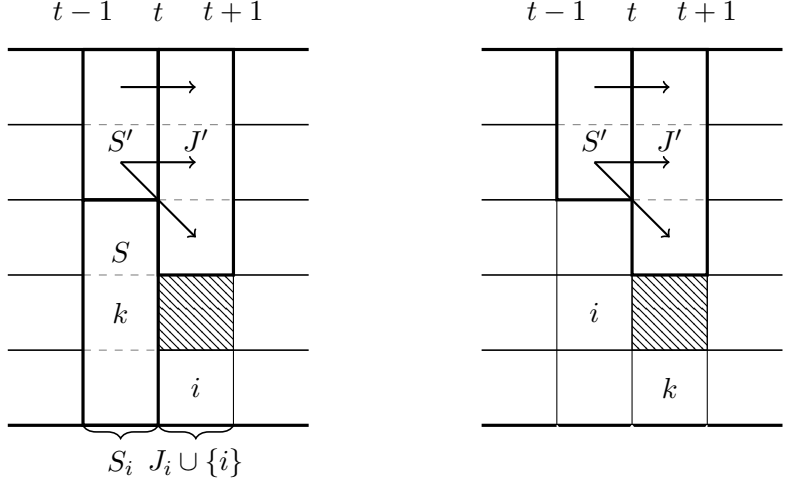

Let $J' = \{j \in J_i \, | \, \mathcal{P}_{L_j}(j) \cap S_i \not= \emptyset\} \cup \{j \in J_i \, | \, \abs{\mathcal{P}(j) \cap \mathcal{A}^*_j} = 1 \text{ and } \mathcal{P}(j) \cap \mathcal{A}^*_j \subseteq S_i\}$ be the set of jobs that are scheduled parallel to $i$ and that are processed directly after their predecessor in the minimal chain or the only predecessor preceding them in the schedule.
Let $S' \subseteq S_i$ be the set of these predecessors of jobs in $J'$.
We do not want to swap $i$ with a job in $S'$ since this would cause an inversion or yield an infeasible schedule.
Note that $\abs{S'} \leq \abs{J'}$, and set $S = S_i \setminus S'$ and $J = J_i \setminus J'$. 
Then $\abs{S} = m - \abs{S'} \geq m - \abs{J'} \geq \abs{J_i} + 1 - \abs{J'}=  \abs{J} +1 \geq 1$, so $S \not= \emptyset$.
We show that any $k \in S$ satisfies the claim.
Figure~\ref{fig:makespan:preemption:swap} illustrates the sets and the corresponding schedules before and after the swap.

Let $k \in S$ be arbitrary and consider the resulting schedule after swapping $k$ and $i$.
Feasibility of the initial schedule implies that at most $m$ jobs are running at any point in time.
It remains to be shown that the precedence constraints are satisfied and that no additional inversions are created.
Let $\mathcal{A}'_j := \{j' \in N^{(p)} \ | \ C'_{j'} < C'_j \}$ for all $j \in N^{(p)}$ be the set of jobs that complete before $j$ starts in the new schedule.

As for feasibility, recall that the precedence constraints for $i$ are not violated if we schedule $i$ in $[t-1;t]$ by assumption.
In the following, let $j \in N^{(p)} \setminus \{i\}$ be a job with predecessors, i.e., $\mathcal{P}(j) \not= \emptyset$.
Note that $\mathcal{P}(j) \cap \mathcal{A}^*_j \not= \emptyset$, since the initial schedule was feasible.
If $j \in N^{(p)} \setminus (J_i \cup \{i\})$, we get $\mathcal{P}(j) \cap \mathcal{A}'_j \not= \emptyset$ because $\mathcal{A}^*_j \subseteq \mathcal{A}'_j$. (Note that strict inclusion only holds for $j = k$.)
For $j \in J'$, its predecessor in $S'$ is still contained in $\mathcal{A}'_j$ since $k \notin S'$.
So $\mathcal{P}(j) \cap \mathcal{A}'_j \not= \emptyset$.
Finally, any job $j \in J_i \setminus J'$ has a predecessor $j' \in \mathcal{P}(j)$ that completes before time $t-1$ by definition of $J'$, so $j' \in \mathcal{A}'_j$.
In total, each job with predecessors is still preceded by one of them, and the schedule is feasible. 

As for the number of inversions, note that the schedule is not altered in the intervals $[0;t-1] \cup [t+1;C^*_{\max}]$.
All jobs that could cause an inversion are contained in $\{i\} \cup S_i$.
Scheduling $i$ one time slot earlier does not cause an inversion by assumption.
The only jobs in $S_i$ that could cause an inversion, if we schedule them in $[t;t+1]$, are contained in $S'$.
Since we swap $i$ with a job in $S = S_i \setminus S'$, swapping $i$ and $k \in S$ does not cause an additional inverted pair.
Hence $I_{C'} \leq I_{C^*}$.
\end{proof}

\begin{lemma}\label{lem:cmax:preemption:structureofOPT}
Let $\{L_j \, | \, j \in N^{(p)}\}$ be a closed collection of minimal chains $L_j \in \mathcal{MC}(j)$ for all $j \in N^{(p)}$.
There exists an optimal preemptive schedule $C^*$ such that $C^*_i < C^*_j$ for all $j \in N^{(p)}$ and $i \in L_j \setminus \{j\}$.
\end{lemma}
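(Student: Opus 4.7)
The plan is to perform an exchange argument using Lemma~\ref{lem:cmax:swap}. Let $C^*$ be an optimal preemptive schedule that minimizes $I_{C^*}$ over all optimal preemptive schedules, and suppose for contradiction that $I_{C^*} > 0$. I shall produce a schedule $C'$ with $C'_{\max} = C^*_{\max}$ and $I_{C'} < I_{C^*}$, contradicting the choice of $C^*$. Throughout, I work in the preemptive instance where every job has unit processing time, so start and completion times differ by exactly one.

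Since $I_{C^*} > 0$ there is a pair $(i,j)$ with $i \in L_j \setminus \{j\}$ and $C^*_i \geq C^*_j$. I choose such a pair lexicographically minimizing $(C^*_i, C^*_j)$, and set $i' := \mathcal{P}_{L_i}(i)$. Closedness of $\{L_k\}$ gives $L_{i'} \subseteq L_i \subseteq L_j$, so $i' \in L_j$; if $(i',j)$ were an inversion then $C^*_{i'} < C^*_i$ would contradict the choice of $(i,j)$. Pushing this reasoning together with the tie-break, I expect to force $C^*_i = C^*_j$ and $C^*_{i'} \leq C^*_i - 2$, the precedence hypothesis of Lemma~\ref{lem:cmax:swap}. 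The remaining hypothesis $|S_i| = m$ follows by inversion-minimality of $C^*$: if some machine were idle in $[C^*_i - 2, C^*_i - 1]$, the established bound on $C^*_{i'}$ lets one shift $i$ into that idle slot directly, giving a schedule with strictly fewer inversions.

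Release dates, which are absent from Lemma~\ref{lem:cmax:swap}, I handle as follows. Suppose $r_i > C^*_i - 2$; since $i$ is a unit-job with $S^*_i = C^*_i - 1 \geq r_i$, this forces $r_i = C^*_i - 1$, i.e., $i$ starts at its release date. Hence $mc(i) \geq r_i + 1 = C^*_i$, and together with the generic bound $mc(i) \leq C^*_i$ from the earliest-start schedule this yields $mc(i) = C^*_i$. Because $i \in L_j \setminus \{j\}$ and every further unit-job on the sub-path from $i$ to $j$ in $L_j$ contributes at least one time unit, $mc(j) \geq mc(i) + 1 = C^*_i + 1$, so $C^*_j \geq mc(j) > C^*_i$, contradicting the inversion. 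Thus $r_i \leq C^*_i - 2$ and Lemma~\ref{lem:cmax:swap} applies, producing $C'$ with $C'_{\max} = C^*_{\max}$ and $I_{C'} \leq I_{C^*}$. Since $C^*_i = C^*_j$, after the swap $C'_i = C^*_i - 1 < C^*_j = C'_j$, so the pair $(i,j)$ is no longer inverted and $I_{C'} < I_{C^*}$, the desired contradiction. The main obstacle I foresee is securing the gap $C^*_{i'} \leq C^*_i - 2$: the borderline case $C^*_{i'} = C^*_i - 1$ will likely need a preliminary application of Lemma~\ref{lem:cmax:swap} along the chain $L_j$ (moving $i'$ back first, using \emph{its} own chain-predecessor, and iterating) in order to create the slot before the decisive swap at $i$.
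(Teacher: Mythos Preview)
Your exchange argument has the right shape, but the step ``I expect to force $C^*_i = C^*_j$'' does not go through, and without it the contradiction collapses.  The lexicographic choice does not bring the two completion times together.  Concretely, take $L_j = \{j_0,j_1,j_2\}$ with $j_2 = j$ and $j_1 = i$, and an optimal schedule with $C^*_{j_0} = 1$, $C^*_i = 7$, $C^*_j = 3$ (feasibility holds because $j$ has some OR-predecessor outside $L_j$ completing at time $2$).  The only inversion is $(i,j)$, so it is certainly lex-minimal, yet $C^*_i \neq C^*_j$.  One application of Lemma~\ref{lem:cmax:swap} moves $i$ to $C'_i = 6 \geq 3 = C'_j$, so $(i,j)$ is still inverted and you only get $I_{C'} \leq I_{C^*}$, not the strict inequality you need.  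The same objection applies to your idle-machine shortcut: moving $i$ into a free slot one step earlier never increases inversions, but it need not remove one either.  The obstacle you flag at the end (securing $C^*_{i'} \leq C^*_i - 2$) is real but secondary; the missing piece is how to guarantee that \emph{some} inversion actually disappears.

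The paper closes this gap by abandoning the single decisive swap.  For an arbitrary inverted pair $(i,j)$ with $i = j_\ell \in \mathcal{P}_{L_j}(j)$, it pushes \emph{every} job $j_1,\dots,j_\ell$ on the chain $L_j$ (in this order) all the way down to its earliest-start position $mc(j_q)$, each push being a sequence of applications of Lemma~\ref{lem:cmax:swap}.  When it is $j_q$'s turn, its chain-predecessor $j_{q-1}$ already sits at $mc(j_{q-1}) \leq mc(j_q)-1$, so the hypothesis of Lemma~\ref{lem:cmax:swap} is met at every intermediate step; your release-date argument ($r_{j_q} \leq mc(j_q)-1$, and the passively moved job $k$ only moves later) carries over verbatim.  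At the end $C'_i = mc(i) < mc(j) \leq C'_j$ (the last inequality because $j$ can only be displaced later as the $k$ of some swap), so $(i,j)$ is no longer inverted, and together with the non-increase from each swap this gives $I_{C'} < I_{C^*}$.  Your ``preliminary application along the chain'' instinct is exactly this idea; it just has to be carried out for the whole prefix $j_1,\dots,j_\ell$, not only for $i'$.
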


\begin{proof}
Recall that all processing times of jobs in $N^{(p)}$ are equal to~$1$.
Consider an optimal schedule with completion times $C^*_j$ for all $j \in N^{(p)}$ such that $I_{C^*}$ is minimal among all optimal solutions.
Suppose by contradiction that $I_{C^*} \geq 1$.
We show how to construct a schedule with $C'_{\max} = C^*_{\max}$ and $I_{C'} < I_{C^*}$ using Lemma~\ref{lem:cmax:swap}.

Since the schedule is optimal, we can assume that the initial job $j^{\text{in}}$ starts at time~0.
Let $j \in N^{(p)}$ and $i \in \mathcal{P}_{L_j}(j)$ such that $(i,j)$ is an inverted pair, i.e., $C^*_i \geq C^*_j \geq mc(j) \geq mc(i) +1$.
We can enumerate the jobs in $L_j = \{j_0,j_1,\dots,j_\ell,j_{\ell +1}\} \in \mathcal{MC}(j)$ such that $j^{\text{in}} = j_0$, $i = j_\ell$, $j = j_{\ell+1}$ and $j_{q-1} \in \mathcal{P}_{L_j}(j_ q)$ for all $q \in [\ell+1]$.
Note that $mc(j_{q-1}) + 1 \leq mc(j_q) \leq C^*_{j_q}$ for all $q \in [\ell+1]$ and $mc(j_0) = mc(j^{\text{in}}) = 0$.

Using Lemma~\ref{lem:cmax:swap}, we move the jobs $j_1,\dots,j_\ell$ successively (in this order) to the front such that they complete at times $mc(j_1),\dots,mc(j_\ell)$, respectively.
For all $k \in N^{(p)}$ with non-trivial release date, it holds
\begin{align}\label{lem:makespan:preemption:noinversion:releasedate}
C^*_k \geq mc(k) \geq r_k + p_k = r_k + 1.
\end{align}
So we can swap those jobs $k \in \{j_1,\dots,j_\ell\}$ that do not complete at time $mc(k)$ to the front without violating the respective release dates.
Thereby, we obtain a schedule that satisfies
\begin{align}\label{lem:makespan:preemption:noinversion:newschedule}
0 = C'_{j_0} < mc(j_1) = C'_{j_1} < mc(j_2) = C'_{j_2} < \cdots < mc(j_\ell) = C'_{j_\ell} < C'_j.
\end{align}
Since we first move job $j_1$ to the front, then $j_2$, and so on, we ensure that, when we apply Lemma~\ref{lem:cmax:swap} for $i = j_q$ (in the notation of Lemma~\ref{lem:cmax:swap}), then its predecessor $j_{q-1}$ completes at time $mc(j_{q-1}) < mc(j_q)$.
So the assumptions of Lemma~\ref{lem:cmax:swap} are satisfied.
The procedure of Lemma~\ref{lem:cmax:swap} does not violate any release dates, since $k \in S$ (in the notation of Lemma~\ref{lem:cmax:swap}) is scheduled later and it is feasible to schedule $j_q$ earlier due to~(\ref{lem:makespan:preemption:noinversion:releasedate}) for all $q \in [\ell]$.

Figure~\ref{fig:makespan:preemption:movetofront} illustrates the current completion times and the time slots in which we move the jobs in the minimal chain $L_j$.
Note that it is not necessary to move the job $j = j_{\ell+1}$.
However, by applying Lemma~\ref{lem:cmax:swap}, it might happen that $k = j$ (in the notation of Lemma~\ref{lem:cmax:swap}) is chosen, i.e., $j$ is ``passively moved''.
Similarly, a job $j_h$ might be ``passively moved'' when we swap $j_q$ with $q < h$ to the front.
This is not a problem, since we deal with $j_h$ in a later iteration.

\begin{figure}
\centering
\begin{tikzpicture}[semithick]
	\tikzmath{\s = 1.2;} % scaling factor
	% jobs
	\draw[fill=gray!30] (2*\s,1*\s) rectangle (3*\s,0*\s) node[midway] {$j_1$};
	\draw[fill=gray!30] (1*\s,2*\s) rectangle (2*\s,1*\s) node[midway] {$j_2$};
	\draw[fill=gray!30] (7*\s,3*\s) rectangle (8*\s,2*\s) node[midway] {$j_3$};
	\draw[fill=gray!30] (6*\s,1*\s) rectangle (7*\s,0*\s) node[midway] {$j_4$};
	\draw[fill=gray!30] (6*\s,2*\s) rectangle (7*\s,1*\s) node[midway] {$j$};
	
	% arrows
	\draw[arc] (2.2*\s,0.5*\s) -- (0.5*\s,0.5*\s);
	\draw[arc] (7.2*\s,2.5*\s) -- (3.5*\s,2.5*\s);
	\draw[arc] (6.2*\s,0.5*\s) -- (4.5*\s,0.5*\s);
	
	% machines
	\draw[very thick] (8.5*\s,3*\s) -- (0*\s,3*\s) -- (0*\s,0*\s) -- (8.5*\s,0*\s);
	\foreach \i in {1,2}{ \draw (0*\s,\i*\s) -- (8.5*\s,\i*\s);}
	\foreach \i in {1,2,3,4,5,6,7,8}{ \draw (\i*\s,3*\s) -- (\i*\s,0*\s);}
	
	\draw[dashed] (0,3*\s) -- (0,3.1*\s) node[above] {$0$};
	\draw[dashed] (1*\s,3*\s) -- (1*\s,3.1*\s) node[above] {$mc(j_1)$};
	\draw[dashed] (2*\s,3*\s) -- (2*\s,3.1*\s) node[above] {$mc(j_2)$};
	\draw[dashed] (3*\s,3*\s) -- (3*\s,3.1*\s) node[above] {$r_{j_3}$};
	\draw[dashed] (4*\s,3*\s) -- (4*\s,3.1*\s) node[above] {$mc(j_3)$};
	\draw[dashed] (5*\s,3*\s) -- (5*\s,3.1*\s) node[above] {$mc(j_4)$};

	\draw[arc] (0,-0.5) -- (3,-0.5) node[midway,yshift=-7pt] {time};
\end{tikzpicture}
\caption{{\small Illustration of the procedure to move jobs in $L_j \setminus \{j\} = \{j_1,j_2,j_3,j_4\}$ to the front. Blank squares are jobs not in $L_j$. Arrows indicate into which time slot we want to move the respective jobs. The jobs are moved ``lowest index first'' rather than all at once. Note that $mc(j_3) > mc(j_2) +1$ because $r_{j_3} = 3$.}}
\label{fig:makespan:preemption:movetofront}
\end{figure}
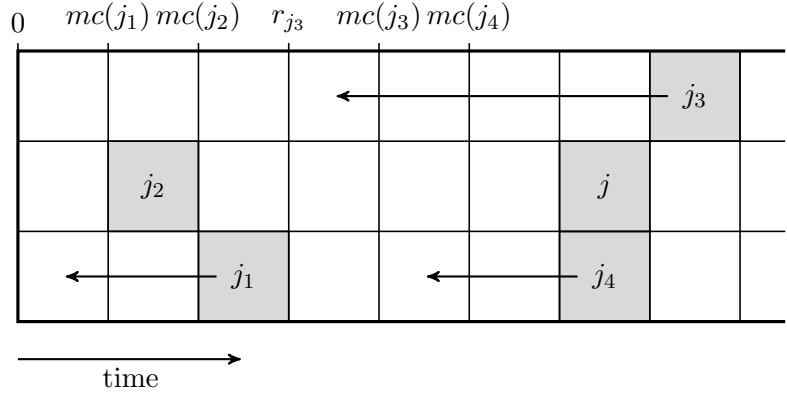

Multiple application of Lemma~\ref{lem:cmax:swap} ensures that the resulting schedule is feasible and has no more inversions than the initial schedule.
Further, Lemma~\ref{lem:cmax:swap} implies $C'_{\max} = C^*_{\max}$, and $I_{C'} < I_{C^*}$ because $i$ and $j$ are not inverted anymore, see~(\ref{lem:makespan:preemption:noinversion:newschedule}).
This contradicts to the choice of the initial schedule being an optimal solution with fewest inversions.
So there exists an optimal solution without inversions, which proves the claim.
\end{proof}

The following lemma shows correctness of \textsc{AlgoPmtn}, and thus proves Theorem~\ref{thm:preemption}.
\begin{lemma}\label{lem:correctness}
\textsc{AlgoPmtn} solves $P \, | \, r_j, \, or\text{-}prec, \, pmtn \, | \, C_{\max}$ to optimality in polynomial time.
\end{lemma}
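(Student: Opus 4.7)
The plan is to establish that \textsc{AlgoPmtn} runs in polynomial time and returns an optimal schedule. For the running time, the earliest start schedule and a closed collection of minimal chains $\{L_j \, | \, j \in N\}$ are constructed in polynomial time as described in Section~\ref{sec:preliminaries}; deleting all arcs not in line with the chosen chains takes linear time; and the resulting graph $G'$ is an outtree rooted at $s$, because every $j \in [n]$ retains exactly one in-neighbor, namely $\mathcal{P}_{L_j}(j)$. Lawler's algorithm~\cite{Lawler1982} then solves the preemptive AND-precedence instance on the outtree $G'$ in polynomial time.

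For optimality, let $\mathrm{OPT}$ and $\mathrm{OPT}'$ denote the optimal preemptive makespans of the original OR-instance on $G$ and of the transformed AND-instance on $G'$, respectively. The inequality $\mathrm{OPT} \leq \mathrm{OPT}'$ is immediate: every schedule feasible for the AND-instance on $G'$ is feasible for the OR-instance on $G$, because the unique $G'$-predecessor of each $j \in [n]$ lies in $\mathcal{P}_G(j)$, so its completion before $j$ starts already satisfies the OR-constraint at $j$.

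For the reverse inequality $\mathrm{OPT}' \leq \mathrm{OPT}$, which is the heart of the argument, I would invoke Lemma~\ref{lem:cmax:preemption:structureofOPT} on the split preemptive instance $N^{(p)}$. The closed collection $\{L_j \, | \, j \in N\}$ extends naturally to a closed collection $\{L_{j_u} \, | \, j_u \in N^{(p)}\}$ by replacing each $i \in L_j \setminus \{j\}$ with all of its unit copies $i_1,\dots,i_{p_i}$ and appending the copies $j_1,\dots,j_u$ of $j$. Lemma~\ref{lem:cmax:preemption:structureofOPT} then produces an optimal preemptive schedule $C^*$ on $N^{(p)}$ with $C^*_i < C^*_{j_u}$ for every $j_u \in N^{(p)}$ and every $i \in L_{j_u} \setminus \{j_u\}$. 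Specialized to the first unit $j_1$ of any $j \in [n]$ with $\mathcal{P}(j) \not= \emptyset$, this guarantees that every unit of the chain predecessor $\mathcal{P}_{L_j}(j)$ completes before $j$ begins, so, re-interpreted on $N$, the schedule satisfies the AND-precedence of $G'$ and witnesses $\mathrm{OPT}' \leq C^*_{\max} = \mathrm{OPT}$. The main obstacle is precisely this translation between $N^{(p)}$ (where Lemma~\ref{lem:cmax:preemption:structureofOPT} is formulated) and $N$ (where \textsc{AlgoPmtn} operates); the crucial observation is that the property asserted by the lemma, restricted to first units, coincides exactly with the AND-precedence requirement imposed by the outtree $G'$.
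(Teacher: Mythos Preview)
Your proposal is correct and follows essentially the same approach as the paper: polynomial running time via the earliest start schedule, closed minimal chains, and Lawler's algorithm; feasibility of the output because $G'$ is a subgraph of $G$; and optimality via Lemma~\ref{lem:cmax:preemption:structureofOPT}, which guarantees an optimal OR-schedule that already respects the AND-constraints of $G'$, yielding $\mathrm{OPT}' \leq \mathrm{OPT}$. The only noteworthy difference is that you spell out the translation between the unit-split instance $N^{(p)}$ (where Lemma~\ref{lem:cmax:preemption:structureofOPT} is stated) and the original instance $N$ (where \textsc{AlgoPmtn} operates), whereas the paper leaves this passage implicit; your added care here is justified and does not change the underlying argument.
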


\begin{proof}
First, observe that the graph $G'$ constructed by \textsc{AlgoPmtn} is a subgraph of the initial precedence graph $G$.
Since the schedule returned by the algorithm is feasible for the AND-instance on $G'$ (this follows from correctness of Lawler's algorithm~\cite{Lawler1982}), it certainly is feasible for the OR-instance on $G$.
Construction of the earliest start schedule and Lawler's algorithm run in polynomial time~\cite{ErlebachKaabMohring2003,Lawler1982}.
Also, we can compute the closed collection of minimal chains and construct $G'$ in polynomial time. 
So \textsc{AlgoPmtn} runs in polynomial time and returns a feasible schedule.

As for optimality of the schedule returned by \textsc{AlgoPmtn}, let $\{L_j \, | \, j \in N \}$ be the closed collection of minimal chains that is computed in the second step, and let $G'$ be the corresponding subgraph of $G$.
Since $\{L_j \, | \, j \in N \}$ is closed, $G'$ is an outforest.
Thus, OR- and AND-precedence constraints on $G'$ are equivalent.

Consider the schedule returned by \textsc{AlgoPmtn}, i.e., by Lawler's algorithm~\cite{Lawler1982} on~$G'$, and let $C_{\max}$ be its makespan.
Since the schedule is feasible for the OR-instance with precedence graph $G'$, it is also feasible for the initial precedence graph $G$.
By Lemma~\ref{lem:cmax:preemption:structureofOPT}, there exists an optimal solution with makespan $C^*_{\max}$ for the instance on $G$ that is also feasible for the instance on $G'$.
Since the schedule returned by \textsc{AlgoPmtn} is optimal for the instance on $G'$, it holds $C_{\max} \leq C^*_{\max}$. This proves the claim.
\end{proof}

%==========================================================================================================
\section{Concluding Remarks}\label{sec:remarks}

In this paper, we discuss the problem of minimizing the makespan on parallel uniform machines with OR-precedence constraints.
We introduce the concept of minimal chains, which is crucial to prove that the List Scheduling algorithm of Graham~\cite{Graham1966} achieves an approximation guarantee of $2$.
Using minimal chains, we show that there exists an optimal preemptive schedule of a certain structure and exploit this structure to obtain a polynomial-time algorithm for the preemptive variant.

This matches the complexity and best-known approximation guarantees of makespan minimization, if the precedence graph is an outtree, which is a special case where AND- and OR-precedence constraints coincide.
Clearly any improvement on OR-precedence constraints directly transfers to AND-precedence constraints on outtrees.
On the other hand, due to the close connection with minimal chains, any progress on the approximation factor of AND-precedence constraints on outtrees might also be applicable to OR-precedence constraints.

We would like to remark that Corollary~\ref{cor:unitprocessingtime} (unit processing times) without release dates was already proven by Johannes~\cite{Johannes2005}.
However, the size of the preemptive instance is not polynomial in the input parameters of the initial instance.
Thus the analysis in~\cite{Johannes2005} cannot be extended to the preemptive case.

%Another standard objective function in scheduling is to minimize the sum of weighted completion times of the jobs.
%Surprisingly, minimizing the sum of weighted completion times seems to be way harder to approximate than the makespan.
%As observed by Johannes~\cite{Johannes2005}, already the single-machine problem with unit processing times is strongly NP-hard.
%The currently best-known approximation algorithm due to Erlebach, K\"a\"ab and M\"ohring~\cite{ErlebachKaabMohring2003} has an approximation factor of $n$ and $\sqrt{n}$ for general and unit weights, respectively. 
%This algorithm also works for general ANDOR-precedence constraints.
%Recently, an algorithm with an approximation ratio of $2 \max_{j \in [n]} \abs{\mathcal{P}(j)}$ was derived for special AND/OR-precedence graphs~\cite{HappachSchulz2019}.
%However the non-constant approximation factors are immediate consequences of the OR-constraints.
%Even if one removes the AND-precedence constraints, the analyses in~\cite{ErlebachKaabMohring2003,HappachSchulz2019} do not yield a better approximation ratio.

\printbibliography

\end{document}